\newtheorem{proposition}{Proposition}
\newtheorem{theorem}{Theorem}
\newtheorem{Cor}{Corollary}
\DeclareMathAlphabet{\mathpzc}{OT1}{pzc}{m}{it}
\DeclareMathAlphabet{\mathcalligra}{T1}{calligra}{m}{n}
\begin{document}

\newgeometry{textheight=25.0cm, voffset = -20pt}
\begin{titlepage}
\singlespacing

    \title{Sophisticated Attacks on Decoy Ballots:\\ The  Devil's Menu and the Market for Lemons\thanks{We would like to thank David Basin, Afsoon Ebrahimi, Georgy Egorov, Lara Schmid, Salvador Barber\`{a} and seminar participants at ETH Zurich for valuable comments. All errors are ours.}}

	\author{Hans Gersbach \\
		\small CER-ETH -- Center of Economic \\
		\small Research at ETH Zurich and CEPR \\
		\small Z\"urichbergstrasse 18 \\
		\small 8092 Zurich, Switzerland \\
		\small hgersbach@ethz.ch \vspace{5mm}\\
	\and Akaki Mamageishvili \\
		\small CER-ETH -- Center of Economic \\
		\small Research at ETH Zurich  \\
		\small Z\"urichbergstrasse 18 \\
		\small 8092 Zurich, Switzerland \\
		\small amamageishvili@ethz.ch \vspace{5mm}\\
	\and Oriol Tejada \\
		\small CER-ETH -- Center of Economic \\
		\small Research at ETH Zurich  \\
		\small Z\"urichbergstrasse 18 \\
		\small 8092 Zurich, Switzerland \\
		\small toriol@ethz.ch \vspace{5mm}\\
	}
		\date{\normalsize First version: July 2017\\ \vspace{1mm}
			This version:  December 2017}
	\maketitle\thispagestyle{empty}

\vspace{-1cm}
\singlespacing

    \begin{abstract}
      Decoy ballots do not count  in election outcomes, but otherwise they are indistinguishable from real ballots. By means of a game-theoretical model, we show that decoy ballots may not provide effective protection against a malevolent adversary trying to buy real ballots. If the citizenry is divided into subgroups (or districts), the adversary can construct a so-called ``Devil's Menu'' consisting of several prices. In equilibrium, the adversary can buy the real ballots of any strict subset of districts at a price corresponding to the willingness to sell on the part of the citizens holding such ballots. By contrast, decoy voters are trapped into selling their ballots at a low, or even negligible, price. Blowing up the adversary's budget by introducing decoy ballots may thus turn out to be futile. The Devil's Menu can also be applied to the well-known ``Lemons Problem''.  

\medskip 

\noindent {{\bf Keywords:}} voting; decoy votes; adversary; electronic voting; attacks; lemons market

\medskip

\noindent {{\bf JEL Classification:}} C72, D4, D82, D86

 \end{abstract}

\end{titlepage}

\restoregeometry

\onehalfspacing

\setcounter{page}{2}

\section{Introduction}

In the past few years electronic voting has become popular in many countries.\footnote{See \url{https://en.wikipedia.org/wiki/Electronic_voting_by_country}, retrieved on 18 September, 2017.} The possibility that voting can be carried out electronically opens up myriad new options for representative and direct democracies alike,  as the marginal cost of voting will be typically lower than for normal elections. One possibility is to randomly select a subgroup of citizens from the entire population and have each of them vote on a single issue. Such a voting scheme is called \textit{random sample voting} \citep{random-sample-voting}. Assuming that the chosen subpopulation (or \textit{sample voting group}) can be trusted to represent the entire citizenry,  random sample voting may improve decision-making by yielding the same decision as standard voting, albeit at a less costly voter-participation level.\footnote{In addition, several issues could be considered at once by having different subpopulations vote separately on each of them. This way, citizens may acquire more information only about the issue on which they have a say, and decisions may be  better informed.}


Electronic voting procedures of this type entail risks. For instance, a malevolent third party---which we will henceforth call \textit{adversary}---may be interested in buying some citizens' right to vote in a particular instance of random sample voting.\footnote{Vote-buying is also a problem in standard elections, and one which has already been studied in the literature \cite[see e.g.][]{dekel2008vote,finan2012vote}.} To prevent this from happening, \cite{random-sample-voting} has proposed the following mechanism: not only the members of the sample voting group receive a ballot, but so do all other  citizens. The difference is that only the ballots of the members of the sample voting group are real, i.e., only they will be counted. The remaining ballots act as a decoy---and hence they are called \textit{decoy ballots}. Crucially, whether a ballot is real or a decoy is a citizen's private information,  so the adversary cannot distinguish real from decoy ballots.\footnote{One could conceive of a ballot as a password in the electronic voting system, where a decoy ballot is simply an invalid password.} Voting systems based on decoy ballots revolve around the idea that selling them to an adversary is valuable for society---and can  thus constitute a social norm---, since doing so may prevent adversaries from buying a large amount of real ballots and then manipulating the voting outcome.\footnote{Parallels can be drawn between decoy ballots preventing vote buying and the idea of producing fake drugs and selling them in the drug market to destroy this market.} \cite{thwarting-vote-buying} have recently shown that, under some assumptions, decoy ballots \textit{do} indeed discourage attempts by an adversary with a limited budget to try and buy (real) ballots.\footnote{For other types of attack on electronic voting, see \cite{basinelection}.} 

In this paper we show that when the whole population with a right to vote is divided into several subgroups or \textit{districts}, the adversary can employ an efficient procedure based on a particular price-offering scheme. With such a procedure, the adversary will be able to purchase all real ballots of a predetermined number of districts $q$ that is lower than the total number of districts, each at a price equal to the willingness to sell on the part of the citizens holding such ballots. Decoy ballots will also have to be bought by the adversary, yet at a lower, possibly even negligible, price. 

The procedure we suggest is called {\it Devil's Menu} and is presented in two forms: a \textit{weak form} and a \textit{strong form}. In both variants, the adversary will offer citizens two slots, called \textit{slot~1} and \textit{slot~2}.  Citizens will choose one of them at most---they are not obliged to participate. Each slot  in its turn is associated with two prices, and the final price that each citizen  is offered in exchange for his ballot (real or decoy) will depend on the slot  chosen and on the final \textit{status} of the district to which he belongs. In other words, the adversary offers each citizen a four-price scheme. The final status of a district will be \textit{selected} if the ratio of slot-$1$ applicants to citizens holding real ballots is among the set of the smallest $q$ ratios for all districts, with ties being broken by fair randomization. Additionally, prices associated with slot~$1$  will be higher in selected districts than in \textit{non-selected} districts. For the procedure to work, it is crucial that, once the corresponding districts have been selected, \textit{all} citizens  who have applied for either slot can decide whether they want to sell their vote at the prevailing price or not.  By contrast, the adversary must be committed to buying the ballots at the prices announced. 

 A Devil's Menu places citizens holding a decoy ballot  in the following dilemma: Applying for slot~$1$ is more attractive, as long as their district is selected. However, by doing so, these citizens  incur the risk that their district will in fact no longer be selected.  For non-selected districts, prices are very low in both slots,  albeit marginally higher in slot~$2$ than in slot~$1$, so voters holding a decoy ballot will choose the latter. In equilibrium, there  is a positive probability that their district  will be selected, so the price they expect to be offered in exchange for their ballot is higher than the sure payoff associated with slot~$1$ in a non-selected district. Citizens holding decoy ballots will nonetheless be paid lower prices for their ballots than voters holding a real ballot. 

We prove four specific results. First, by setting the four prices appropriately, the adversary will induce a unique equilibrium of the underlying game, in which citizens with a real ballot apply for slot~$1$ and citizens with a decoy ballot apply for slot~$2$. This is the weak form of the Devil's Menu. It guarantees that the price decoy voters will receive in this (unique) equilibrium will be significantly lower than the price at which the real voters will be willing to sell, though it will not be negligible. Second, running the Devil's Menu sequentially  and targeting all real votes of only one district each time reduces the budget required by the adversary by lowering the price eventually paid for decoy ballots. Third, this latter price can in fact be set arbitrarily low, thereby eliminating almost all superfluous expenditures. The adversary's budget can then be used almost entirely for real ballots,  so blowing up the adversary's budget  with decoy ballots may prove futile. This is the strong form of the Devil's Menu. There are two variants of the Devil's Menu that ensure this desirable circumstance: on the one hand, the price associated with slot~$2$ in non-selected districts can be made arbitrarily low---up to the point where only three different prices are  in fact offered---,  though this opens up the possibility that additional equilibria may exist; on the other, a more sophisticated six-price Devil's Menu may be used, which again ensures uniqueness of the equilibrium targeted. This latter variant requires that the adversary can offer two different prices to applicants of slot~$2$ that belong to selected districts, based on the districts' interim status. Fourth and last, if the adversary has strong commitment power, she can achieve this same outcome even if the citizenry is not divided into subgroups (i.e., there is only one district).

The remainder of the paper is organized as follows: In Section~\ref{sec:model} we present our model. In Section~\ref{sec:mechanism} we introduce the Devil's Menu with four prices and show its weak form. In Section~\ref{sec:costless_buying} we analyze alternative forms of the Devil's Menu, including two variants of its strong form.  Section~\ref{sec:extensions} discusses extensions to the model and applications of the Devil's Menu to the  well-known ``Lemons Problem.'' Section~\ref{sec:conclusion} concludes.

\section{The Model}\label{sec:model}

There is a finite set of risk-neutral citizens denoted by $N$. Each citizen $i\in N$ may be of one of two types: if $t_i=R$, he has a \textit{real ballot}; if $t_i=D$, he has a \textit{decoy ballot}. The former citizens are henceforth called \textit{real voters}, the latter \textit{decoy voters}. We assume that $N$ is partitioned into $\bar{k}>1$ subgroups or \textit{districts} $N_1,\ldots,N_{\bar{k}}$, with $n_k^R$ and $n^D_k$ denoting the number of ballots of each type within $N_k$, for all $k=1,\ldots,\bar{k}$. Natural partitions of the citizenry  may be based on states, cities, and villages. We assume that  while $(n_k^R,n_k^D)_{k=1}^{\bar{k}}$ is common knowledge, the type of each individual citizen is private information. For any set $S$, it will be convenient throughout to let $s$ denote its cardinality. Accordingly, $n=n_1+\ldots+n_{\bar{k}}$ and $n_k=n_k^R+n_k^D$, for all $k=1,\ldots,\bar{k}$. We consider that $n_1, n_2 ,\ldots , n_{\bar{k}}>1$ and let $n^D:=n^D_1+\ldots+n^D_{\bar{k}}$ and $n^R:=n-n^D$. 

For simplicity, we assume that any real voter will value his ballot at $V>0$, while any decoy voter will value his ballot at $0$.\footnote{This assumption is discussed in Section~\ref{subsec:general_valuation}.} Both assumptions are common knowledge. Besides the citizens, there is an \textit{adversary}, henceforth denoted by $A$. For the sake of clarity, throughout the paper we will use ``she'' to refer to the adversary and ``he'' to refer to any voter. The goal of the adversary is to buy all real ballots of exactly $q$ districts at minimal cost, with $1\leq q\leq \bar{k}- 1$.\footnote{If decoy ballots were applied in voting procedures with a structure similar to that of the US presidential elections, the adversary might only be interested in buying real ballots of swing states. We proceed on the assumption that identities of districts are not relevant for the adversary,  so the different procedures we suggest  would have to be adapted if that were not the case.} A first attempt based on a  brute-force attack would be to try and buy all ballots at a price $V+\varepsilon$, for some $\varepsilon>0$ arbitrarily small.   Doing so is feasible if $A$'s budget, which we denote by $B$, is large enough. It is sufficient to assume that
\begin{equation}
\label{eq:budget_condition}
B \geq\max_{S \subseteq  \{1,\ldots,\bar{k}\}, s=q} \left[ (V+\varepsilon) \cdot \sum_{k\in S} n_k \right],
\end{equation} 
and that this is common knowledge. The right-hand side in the above inequality is the amount the adversary would have to spend if she bought \textit{all} ballots from an arbitrary selection of $q$ districts. To do so, she would offer to pay $V+\varepsilon$ to all citizens in these districts in exchange for their ballots. Because the price offered to the citizens would be higher than their ballot valuation, regardless of the type of ballot they hold, they would \textit{all} accept the transaction. Yet it is clear that this  brute-force attack is very inefficient, since the adversary has to buy all ballots, real and decoy, at the same high price, namely $V+\varepsilon$. In the next section, we show that the adversary can  achieve the same goal of buying all real ballots of $q$ districts at a much lower cost if she employs what we call a Devil's Menu.




\section{The Devil's Menu}\label{sec:mechanism}

First, we describe the mechanism that guarantees that the adversary will be able to buy all real ballots in $q$ districts, each at price $V+\varepsilon$, without having to pay this same amount to all decoy voters in these districts. For its successful operation, the adversary must be committed to its rules---i.e. she  cannot change its functioning at any point in time---, and this property must be common knowledge among the citizens, who will then simply envision the mechanism as the execution of an algorithm. As a tie-breaking rule, we assume that the individual citizen will not offer his ballot to the adversary if he is indifferent between his valuation of the ballot and the price offered by the adversary. The mechanism works as follows:

\begin{enumerate}	
\item  Adversary $A$ offers two slots to all citizens: slot~one (denoted by $s_1$) and slot~two (denoted by $s_2$). Depending on the ratio of slot-$s_1$ applicants to real voters across districts, the adversary will select $q$ districts, as formulated in Step~3. Moreover, $A$ offers the following prices for ballots:

\begin{table}[H]
	\begin{center}
		\def\arraystretch{1.5}
		\begin{tabular}{ c|c|c| } 
			   & slot $s_1$ & slot $s_2$ \\  		\hline
			 district selected & $p_1^{se}=V+\varepsilon$ & $p_2^{se}=\delta$ \\ 	\hline
			district not selected & $p_1^{ns}=\varepsilon$& $p_2^{ns}=2\varepsilon$ \\ 	\hline
		\end{tabular}
	\end{center}
	\caption{The Devil's Menu with four prices---depending on the slot applied for (column) and the final status of  the voter's district (row).}
	\label{table_prices}
\end{table}

In this four-price scheme, $\varepsilon>0$ is assumed to be arbitrarily small, while $\delta$ satisfies $2\varepsilon<\delta\leq V-\varepsilon$.

\item Each citizen either applies for one of the two slots or abstains. For each $k=1,\ldots,\bar{k}$, we let $m_k$ be the number of citizens from $N_k$ who apply for slot~$s_1$, and we set $\rho_k:=\frac{m_k}{n^R_k}$.

\item  Assume w.l.o.g that $\rho_1 \leq \rho_2 \leq \ldots \leq \rho_{\bar{k}}$ and let $\rho : = \rho_q$. Then define the following three sets of districts:
\begin{align*}
C=\{N_k \vert k\in \{1,\ldots,\bar{k}\}, \rho_k<\rho\},\\
T=\{N_k \vert k\in \{1,\ldots,\bar{k}\}, \rho_k=\rho\},\\
O=\{N_k \vert k\in \{1,\ldots,\bar{k}\}, \rho_k>\rho\}.
\end{align*}
Accordingly, we have $\bar{k}=c+t+o$ and $t\geq \max \{1,q-c\}$. All districts in $C$ are \textit{selected} by $A$. From the set $T$, $q-c$ districts are chosen by fair randomization so that their \textit{final status} is also selected. That is, a district in $T$ has a probability $\frac{q-c}{t}$  of being selected by the mechanism---and then belonging to $T$ is its \textit{interim status}. The final status of the remaining districts is \textit{non-selected}.


\item Once all districts are assigned their final status---and  hence the price offered to each applicant has been determined---, these citizens decide whether to sell their vote or not at the prevailing price. The adversary is obliged to accept the transaction at the request of the citizens.

\end{enumerate}

The \textit{Devil's Menu} is a four-price scheme in which individuals sequentially commit first, to a subset of prices within a slot, and, second, to the possibility of selling their vote once the price has been determined for the slot they  have applied for. These final prices depend on the choices by \textit{all} the citizens. In turn, the adversary commits to the correct execution of the mechanism. It will be useful to display the (expected) payoff matrices of voters. Since a decoy voter will always sell his vote in Step 4, his expected payoff is given in Table~\ref{table_payoffs}. 

\begin{table}[H]
	\begin{center}
		\def\arraystretch{1.5}
		\begin{tabular}{ c|c|c|c| } 
			& $C$ & $T$ & $O$ \\  		\hline
			$s_1$ & $V+\varepsilon$ & $\frac{q-c}{t}\cdot (V+\varepsilon)+ \frac{t-q+c}{t}\cdot \varepsilon $ & $\varepsilon$ \\ 	\hline
			$s_2$ & $\delta$ & $\frac{q-c}{t}\cdot \delta  + \frac{t-q+c}{t}\cdot2\varepsilon$ & $2\varepsilon$ \\ 	\hline
		\end{tabular}
	\end{center}
	\caption{Expected payoff of a decoy voter in the Devil's Menu with four prices---depending on the slot  applied for (column) and the final status of  the voter's district (column).}
	\label{table_payoffs}
\end{table}

Real voters will, of course, not sell their vote if their district is not chosen, nor will they sell their vote if they  have applied for slot~$s_2$. Their expected payoff is then given in Table~\ref{table_expected_payoffs_realvoter}.

\begin{table}[H]
	\begin{center}
		\def\arraystretch{1.5}
		\begin{tabular}{ c|c|c|c| } 
			& $C$ & $T$ & $O$ \\  		\hline
			$s_1$ & $V+\varepsilon$ & $\frac{q-c}{t}\cdot (V+\varepsilon)+ \frac{t-q+c}{t}\cdot V $ & $V$ \\ 	\hline
			$s_2$ & $V$ & $V$ & $V$ \\ 	\hline
		\end{tabular}
	\end{center}
	\caption{Expected payoff of a real voter in the Devil's Menu with four prices---depending on the slot  applied for (column) and the final status of  the voter's district (column).}
	\label{table_expected_payoffs_realvoter}
\end{table}

\subsection{The weak form of the Devil's Menu}\label{subsec:main_result}

With the four-price Devil's Menu in place, and taking into account the optimal choices of Step 4 discussed above, we can define a simultaneous-move game, which we denote by~$\mathcal{G}$. The player set is $N$, and for each citizen $i$, his strategy set is $S_i=\{s_1,s_2\}$. The payoff matrices are given by Tables~\ref{table_payoffs}~and~\ref{table_expected_payoffs_realvoter}, with the selection of districts determined in accordance with Step 3 of the Devil's Menu. The following observations follow  immediately: First, for each citizen, it is a weakly dominated strategy to abstain and not to apply for either of the slots. We shall assume that no player will play a weakly dominated strategy,  i.e. all citizens will accept to participate in the Devil's Menu. Second, for a real voter, applying for $s_1$ weakly dominates applying for $s_2$. After these assumptions, we obtain our first main result.\footnote{If $q=\bar k$, the result of Theorem \ref{thm:main} is still valid. Nevertheless, it does not yield any budget savings for the adversary with respect to the brute-force attack.} 


\begin{theorem}\label{thm:main}\normalfont{\bf [Weak form of the Devil's Menu]}
Suppose $\delta\geq\frac{q}{\bar{k}}V+2\varepsilon$. Then the strategy profile $\sigma^*$, in which all decoy voters apply for slot~$s_2$ and all real voters apply for slot~$s_1$, is the only Nash equilibrium of game~$\mathcal{G}$.
\end{theorem}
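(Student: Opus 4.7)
My plan is to split the proof into existence and uniqueness, with the hypothesis $\delta\geq (q/\bar{k})V+2\varepsilon$ invoked only in the uniqueness part.

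First I would verify that $\sigma^*$ is a Nash equilibrium. Under $\sigma^*$ every $m_k$ equals $n_k^R$, so $\rho_k=1$ for all districts; thus $C=O=\emptyset$, every district lies in $T$, and each is selected with probability $q/\bar{k}$. A real voter who switches from $s_1$ to $s_2$ uniquely depresses his own district's ratio below $1$, making it the sole element of $C$ and hence selected for sure; but since $\delta\leq V-\varepsilon<V$ he refuses to sell and pockets $V$, strictly below his $\sigma^*$-payoff of $V+(q/\bar{k})\varepsilon$. Symmetrically, a decoy voter who switches from $s_2$ to $s_1$ uniquely lifts his district's ratio above $1$, making it the sole element of $O$ and hence unselected, yielding $\varepsilon$; his $\sigma^*$-payoff is $(q/\bar{k})\delta+(1-q/\bar{k})\cdot 2\varepsilon>2\varepsilon>\varepsilon$ since $\delta>2\varepsilon$. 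Hence $\sigma^*$ is a Nash equilibrium, and this step alone does not use the hypothesis on $\delta$.

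For uniqueness, recall that by the weak-dominance assumption stated just before the theorem every real voter applies for $s_1$ in any equilibrium, so $\rho_k=1+d_k/n_k^R$ where $d_k$ is the number of decoy voters in $N_k$ applying for $s_1$. Suppose for contradiction that some equilibrium $\sigma\neq\sigma^*$ exists; then $d_{k_0}\geq 1$ for at least one district, and thus $\rho^{\max}:=\max_k\rho_k>1$. Fix a district $j^*$ attaining $\rho^{\max}$ and consider a decoy voter in $N_{j^*}$ who applies for $s_1$. If $j^*\in O$, his current payoff is $\varepsilon$; a deviation to $s_2$ lowers $\rho_{j^*}$ by $1/n_{j^*}^R$, and regardless of whether the new ratio keeps $j^*$ in $O$ (payoff $2\varepsilon$), moves it to $T$ (a convex combination of $\delta$ and $2\varepsilon$, hence $\geq 2\varepsilon$), or pushes it to or below $\rho$ (payoff $\delta$), the outcome strictly exceeds $\varepsilon$ because $\delta>2\varepsilon$. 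So equilibrium forces $j^*\in T$.

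In the remaining case $\rho_{j^*}=\rho=\rho^{\max}$, so $O=\emptyset$ and $c+t=\bar{k}$. Writing $\alpha:=(q-c)/t$, the expected payoff of a decoy voter in $j^*$ from applying for $s_1$ equals $\alpha(V+\varepsilon)+(1-\alpha)\varepsilon=\alpha V+\varepsilon$. A deviation to $s_2$ takes the ratio of $j^*$ strictly below $\rho$; a short case analysis (split on whether $q\geq c+2$ or $q=c+1$, and in the latter subcase on how $\rho-1/n_{j^*}^R$ compares with the largest old-$C$ ratio) shows that $j^*$ always ends up selected with probability $1$, giving the deviator payoff $\delta$. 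Equilibrium therefore requires $\alpha V+\varepsilon\geq\delta$, which combined with the hypothesis $\delta\geq(q/\bar{k})V+2\varepsilon$ yields $\alpha\geq q/\bar{k}+\varepsilon/V$. But $c+t=\bar{k}$ together with $c\geq 0$ and $q\leq\bar{k}-1$ implies the elementary bound $(q-c)/t\leq q/\bar{k}$ (equivalent to $c(\bar{k}-q)\geq 0$), contradicting the previous inequality. The main technical obstacle is precisely this case, namely verifying that after the unilateral deviation inside $T$ the maximizing district $j^*$ is still selected with certainty; once that is pinned down, the final contradiction is a one-line algebraic comparison driven by the hypothesis on $\delta$.
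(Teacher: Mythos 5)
Your proof is correct and follows essentially the same route as the paper's: uniqueness is obtained by showing that in any equilibrium a decoy voter applying for $s_1$ in an $O$-district, and then in a $T$-district, has a profitable deviation to $s_2$, with the final contradiction resting on the same inequality $\frac{q-c}{\bar{k}-c}V+\varepsilon<\frac{q}{\bar{k}}V+2\varepsilon\leq\delta$. You add two refinements the paper glosses over---an explicit verification that $\sigma^*$ is itself an equilibrium, and a careful check that after the unilateral deviation the deviator's district is still selected with probability one even though the threshold ratio may shift---both of which are sound.
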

\begin{proof}
Let $\sigma=(\sigma_i)_{i\in N}$ be a strategy profile that constitutes a Nash equilibrium of $\mathcal{G}$. Because all citizens with a real ballot will apply for $s_1$, we can assume w.l.o.g. that, after breaking ties, we have
\begin{equation*}
1\leq \rho_1 \leq \rho_2 \leq \ldots \leq \rho_{\bar{k}}.
\end{equation*}
We next observe that in any equilibrium,  set $O$ is empty and thus $o$ is equal to $0$. Indeed, assume that $o>0$. Then, in any of the districts $k$ that belong to $O$ such that $\rho_k > 1$, any decoy voter $i$ for which $\sigma_i=s_1$ will have an incentive to deviate and apply for slot~$s_2$. This deviation strictly improves expected utility. The reason is as follows: As a consequence of the deviation, either district $N_k$ becomes a selected district, and the expected utility is strictly larger than $\varepsilon$ (since $\delta>\varepsilon$), or district $N_k$ still belongs to $O$, in which case the deviation increases the payment to citizen $i$ by $\varepsilon$.

Since $o=0$ in any Nash equilibrium, we can assume that $t= \bar{k}-c$. We prove next that $\sigma$ cannot be a Nash equilibrium if $\rho>1$. This property can be derived as follows: Suppose  that $\rho>1$. Then, in each district belonging to $T$, there is at least one decoy voter who has applied for slot~$s_1$ in $\sigma$. Consider one of these districts, say $N_k$. The probability that $N_k$ will be selected by fair randomization is equal to $\frac{q-c}{t}$.  Now let $i$ be any decoy voter for which $\sigma_i=s_1$ and who belongs to $N_k$. Then, $i$'s expected payoff is equal to $\frac{q-c}{t}\cdot (V+\varepsilon)+\frac{t-q+c}{t}\cdot\varepsilon$. By contrast, if $i$ deviates from his strategy in $\sigma$ and applies for $s_2$, set $T$ will consist of $t-1$ districts, and set $C$ will consist of $c+1$ districts. In this latter case, district $k$ is selected with probability $1$, and the payoff to decoy voter $i$ is equal to $ \delta$. The latter payoff is strictly higher than $\frac{q-c}{t}\cdot (V+\varepsilon)+\frac{t-q+c}{t}\cdot\varepsilon$, since
\begin{equation*}
\frac{q-c}{t} (V+\varepsilon)+\frac{t-q+c}{t}\varepsilon =\frac{q-c}{\bar{k}-c} V+\varepsilon \leq\frac{q}{\bar{k}} V +\varepsilon<\frac{q}{\bar{k}}V+2\varepsilon\leq\delta.
\end{equation*}
Accordingly, it must be that $\rho=1$. To sum up, we have shown that in any Nash equilibrium of game~$\mathcal{G}$, we have $\rho=1$ and $o=0$. This completes the proof. 
\end{proof}

The rationale for the functioning of the Devil's Menu with four prices as described in Theorem~\ref{thm:main} is clear. By offering different prices---whose  realizations depend on the behavior of the citizenry, both at the district and  the aggregate level---, decoy voters are trapped. These citizens would like to be in slot~$s_1$ and simultaneously ensure that their district is selected by the adversary. Applying for~$s_1$, however, reduces the chances of their district being selected. As a consequence, the Devil's Menu generates downward pressure on $\rho_1,\ldots,\rho_{\bar{k}}$, which results in only real voters applying for slot~$s_1$ in all districts. In particular, only real voters of the selected  districts are paid the high price~$V+\varepsilon$, with decoy voters in these same districts obtaining~$\delta$ instead. In turn, decoy voters in non-selected districts obtain~$2\varepsilon$. Real voters in non-selected districts will not sell their ballots. All in all, the budget that the adversary pays in equilibrium will never be higher than the following bound:
\begin{equation}
\max_{S \subseteq  \{1,\ldots,\bar{k}\}, s=q} \left[ (V+\varepsilon) \cdot \sum_{k\in S} n^R_k+   \delta \cdot \sum_{k\in S} n^D_k + 2\varepsilon   \cdot\sum_{k\notin S}n^D_k  \right]=:\bar{B}.\label{eqn:budget}
\end{equation} 
Of course, the adversary will choose the lowest possible value of~$\delta$ to minimize her costs, so she will set~$\delta$ equal to $\frac{q}{\bar{k}}V+2\varepsilon$.

\subsection{Critical and non-critical assumptions}\label{subsec:assumptions}

In the following, we discuss some critical and less critical assumptions of the Devil's Menu. First, observe that the budget bound~$\bar{B}$ is not only sufficient to buy the real ballots from $q$ districts on the equilibrium path, it would also be sufficient to match expenditures if individual decoy voters deviated from the unique equilibrium. Indeed, Theorem~\ref{thm:main} and the construction of the Devil's Menu imply the following result:

\begin{Cor}\label{cor:sabotage}
	The budget bound $\bar{B}$ is equal to, or higher than, the expenditures faced by the adversary when one decoy voter deviates and applies for slot~$s_1$.
\end{Cor}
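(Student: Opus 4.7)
The plan is to trace the mechanism step by step under the single-voter deviation and then verify the inequality by direct accounting. I would start from the equilibrium $\sigma^*$ identified in Theorem~\ref{thm:main}, in which every real voter plays $s_1$ and every decoy voter plays $s_2$; this gives $\rho_k=1$ for every district, so $T$ contains all $\bar k$ districts and any $q$ of them may be picked as the selected set.

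Next I would consider a single decoy voter in some district $N_{k^*}$ switching from $s_2$ to $s_1$. The only ratio that changes is $\rho_{k^*}$, which rises from $1$ to $(n^R_{k^*}+1)/n^R_{k^*}$. Since $q\leq\bar k-1$, the cutoff $\rho$ remains equal to $1$, so $C=\emptyset$, $T$ consists of the other $\bar k-1$ districts, and $O=\{N_{k^*}\}$. Hence $N_{k^*}$ is excluded from the selection, and the post-deviation selected set $S'$ is some $q$-subset of $\{1,\ldots,\bar k\}\setminus\{k^*\}$. The key conceptual point is precisely this: the deviation pushes the deviator's own district strictly above every other $\rho$-value, and so strands it in $O$.

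The last step is to sum up the adversary's payments and compare them to $\bar B$. In every selected district $N_j\in S'$ the adversary pays $V+\varepsilon$ per real voter and $\delta$ per decoy voter; in every non-selected district other than $N_{k^*}$ only decoy voters trade, each receiving $2\varepsilon$; inside $N_{k^*}$ the $n^D_{k^*}-1$ non-deviating decoy voters still receive $2\varepsilon$, while the deviator now receives $\varepsilon$ (the slot-$s_1$ price for a non-selected district). Adding these contributions reproduces exactly the bracketed expression in the definition of $\bar B$ for $S=S'$, but diminished by $\varepsilon$, because the deviator is paid $\varepsilon$ rather than $2\varepsilon$. Since $|S'|=q$, that bracketed expression is bounded above by $\bar B$, so the realized expenditure is at most $\bar B-\varepsilon$, strictly below the claimed bound. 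The argument is essentially bookkeeping and I do not foresee any serious obstacle; it is worth noting in passing that the deviation actually \emph{reduces} rather than inflates the adversary's outlay, which is what makes the accounting work in her favour.
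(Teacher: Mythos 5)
Your proposal is correct and rests on exactly the same observation as the paper's one-line proof: the deviation pushes the deviator's district into $O$, so it becomes non-selected with probability~1 and the deviator is paid only $\varepsilon$. You simply make the resulting bookkeeping explicit (showing the expenditure is in fact at most $\bar B-\varepsilon$), which the paper leaves implicit.
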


This corollary follows directly from the observation that a deviation  by one decoy voter  will turn his district into a non-selected district with probability 1. Accordingly, this voter cannot cause a budget problem for the adversary, nor can he sabotage the functioning of the Devil's Menu with four prices. We further note that to guarantee that the procedure can be run---and that the equilibrium described in Theorem~\ref{thm:main} is unique---, it suffices  for the adversary's budget $B$  to satisfy Inequality~(\ref{eq:budget_condition}) and for this to be common knowledge.

Second, in real voting settings, it is reasonable to expect the number of real voters in any given district to be fixed---and easy to guess for the adversary---, since otherwise the aggregate preferences of the society may not be well represented by the citizens receiving the real ballots. As already discussed, the Devil's Menu proceeds on the assumption that the adversary knows the number of real ballots and the number of citizens in any district. This premise is further discussed in Section~\ref{subsec:noise}.

 
Third, another assumption of our model is that citizens make their applications simultaneously. However, this is not crucial for our results. Indeed, at any point in time where some citizens have already applied for slots as prescribed by Theorem~\ref{thm:main}, the best response of any remaining citizen is to apply for a slot in the same way. That is, the unique equilibrium outcome of Theorem~\ref{thm:main}  also arises as the unique equilibrium of any dynamic version of game~$\mathcal{G}$. Whether dynamic or not, computing the optimal strategy is an easy task for any voter, so the requirements to behave rationally are neither demanding nor unrealistic.

Finally, we note that implementation of any of the above mechanisms can be effected using so-called  {\it Smart Contracts}.\footnote{We refer to \cite{smart}. See also \url{https://en.wikipedia.org/wiki/Smart_contract}, retrieved on 16 November, 2017.} These contracts are computer protocols intended to facilitate, verify, and enforce the exchange between individuals. The Devil's Menu can be coded in any programming language and run on the blockchain. Payments may be carried out in any of the crypto-currencies implementing Smart Contracts. Since the main property of Smart Contracts is that they are self-executing and self-enforcing, the adversary will be committed to the protocol and the payments. This, in turn, justifies the assumptions we made for Theorem~\ref{thm:main} about citizen behavior. Citizens can join the contract at any time before timeout by sending their votes in the form of passwords. Smart Contracts also provide security and partial anonymity, which are paramount in our voting set-up.

\section{Buying Decoy Ballots Cheaper}\label{sec:costless_buying}

In the preceding section we have shown that the adversary can buy all real ballots of $q$ districts, but she cannot avoid paying the price $\delta$ to all decoy voters in those districts. In this section we outline three ways in which this price can be lowered.

\subsection{Sequential vote buying}

 First, the adversary may proceed sequentially and buy only the real ballots of one district at a particular point in time. Specifically, suppose that there are~$q$ points in time and the adversary commits to buy the real ballots of only one district at each date. We obtain the following result:
 
\begin{proposition}\label{prop:sequential}
	Suppose $\delta\geq\frac{1}{\bar{k}-q+1}V+2\varepsilon$. If the adversary uses the Devil's Menu with four prices sequentially and buys the real ballots of one district at each date, the strategy profile $\sigma^*$ in which all decoy voters apply for slot~$s_2$ and all real voters apply for slot~$s_1$ is the only subgame perfect equilibrium. 
\end{proposition}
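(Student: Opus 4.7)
The plan is to argue by backward induction on the date $j \in \{1, \ldots, q\}$, showing that in any subgame-perfect equilibrium the round-$j$ play coincides with the $\sigma^*$ prescription.

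For the base case at date $j=q$, the round-$q$ subgame is itself a one-shot Devil's Menu played on the $\bar{k}-q+1$ districts that have not yet been bought out, with target $q'=1$. The hypothesis $\delta \geq \frac{1}{\bar{k}-q+1}V + 2\varepsilon$ is exactly the condition $\delta \geq \frac{q'}{\bar{k}'}V + 2\varepsilon$ of Theorem~\ref{thm:main} for $(\bar{k}', q') = (\bar{k}-q+1, 1)$, so the theorem delivers $\sigma^*$ as the unique Nash equilibrium of this final subgame.

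For the inductive step at a date $j<q$, I would assume that $\sigma^*$ is played in every subgame starting at a date $j'>j$ and then transport the two-step argument from the proof of Theorem~\ref{thm:main} to the date-$j$ stage game. At this round there are $\bar{k}-j+1 \geq \bar{k}-q+1$ districts in play, so the analogous stage-game threshold $\delta \geq \frac{1}{\bar{k}-j+1}V + 2\varepsilon$ is automatically implied by the hypothesis. Real voters' weak preference for slot $s_1$ transfers unchanged, since both $s_2$ prices are strictly below $V$ at every round, regardless of whether a district is eventually bought out at the current date or at a later one.

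What I expect to be the main obstacle is the continuation-payoff bookkeeping for decoys: a decoy in a district that is not selected at round $j$ may now, instead of selling at $2\varepsilon$, keep his ballot and enter the round-$(j+1)$ subgame with a positive continuation value $V_{j+1}$. To stop this option from creating new profitable deviations, I would establish by a short backward induction that $V_{j+1} < \delta$ strictly---using the recursion $V_{j+1} = \frac{\delta}{\bar{k}-j} + \frac{\bar{k}-j-1}{\bar{k}-j}V_{j+2}$ together with the base $V_q = \frac{\delta + 2\varepsilon(\bar{k}-q)}{\bar{k}-q+1}$ and the fact that the hypothesis forces $\delta > 2\varepsilon$. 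Equipped with this strict bound, the deviation used in the proof of Theorem~\ref{thm:main}, in which a decoy in a tied district switches from $s_1$ to $s_2$ and thereby becomes selected with probability one, still yields the strictly higher payoff $\delta$ compared with the original convex combination that is bounded above by $V_{j+1}$. Hence the conclusions $O=\emptyset$ and $\rho=1$ carry over to round $j$, uniquely identifying $\sigma^*$ and completing the induction.
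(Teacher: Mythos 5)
Your skeleton---backward induction, with the date-$q$ subgame being a one-shot Devil's Menu on $\bar{k}-q+1$ districts with target $1$ so that Theorem~\ref{thm:main} applies verbatim---is exactly the paper's, and you are right (and more careful than the paper) to flag the continuation value $V_{j+1}$ of an unsold decoy ballot as the real issue at dates $j<q$. But your resolution does not close the gap. The stage-$j$ payoff of a decoy who applies for $s_1$ in a tied district is $\frac{1}{\bar k-j+1}(V+\varepsilon)+\frac{\bar k-j}{\bar k-j+1}V_{j+1}$ (here $c=0$ since only one district is targeted); it is \emph{not} ``bounded above by $V_{j+1}$'', because it puts positive weight on $V+\varepsilon>\delta$. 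If all you know is $V_{j+1}<\delta$, the inequality you need, namely $\delta>\frac{1}{\bar k-j+1}(V+\varepsilon)+\frac{\bar k-j}{\bar k-j+1}V_{j+1}$, degenerates to the requirement $\delta>V+\varepsilon$, which is false since the menu imposes $\delta\le V-\varepsilon$. What actually saves the argument is the closed form of your own recursion, $V_{j+1}=\frac{q-j}{\bar k-j}\delta+\frac{\bar k-q}{\bar k-j}\cdot 2\varepsilon$: substituting it, the required condition collapses, independently of $j$, to $\delta>\frac{V}{\bar k-q+1}+\bigl(2-\frac{1}{\bar k-q+1}\bigr)\varepsilon$, which the hypothesis delivers strictly. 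So you must carry the exact value of $V_{j+1}$ through the computation; the crude bound $V_{j+1}<\delta$ is too weak to conclude.

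A second gap you leave untouched is the step ``$O=\emptyset$'' inherited from Theorem~\ref{thm:main}, which rests on a decoy in an oversubscribed district strictly gaining $\varepsilon$ by switching to $s_2$. At a date $j<q$ a decoy in a non-selected district optimally declines both the $\varepsilon$ and the $2\varepsilon$ offers and keeps his ballot, earning $V_{j+1}$ in either slot; hence if two or more decoys of one district apply for $s_1$, no unilateral switch changes that district's status and the deviation is only weakly improving. This admits additional stage-game Nash equilibria (and therefore additional subgame-perfect equilibria) at early dates, so the uniqueness claim needs a separate argument or an explicit modelling assumption about unsold ballots. To be fair, this is also a weakness of the paper's own one-line induction, which simply asserts that earlier dates have ``a smaller critical value'' and ignores continuation values altogether.
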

\begin{proof}
	Proposition~\ref{prop:sequential} can be proved by backward induction. At the last date, there are $\bar{k}-q+1$ districts, of which only one will be selected by the adversary. The corresponding critical value for $\delta$ such that $\sigma^*$ is the only Nash equilibrium at the last date is thus
	\begin{equation*}
		\frac{1}{\bar{k}-q+1}\left(V+\varepsilon\right)+ \varepsilon\leq\frac{1}{\bar{k}-q+1}V+2\varepsilon.
	\end{equation*}
	At all preceding dates, we have a smaller critical value for $\delta$,  which ensures that $\sigma^*$ will prescribe only those actions  at this stage that are compatible with equilibrium behavior, anticipating that $\sigma^*$ will also determine the actions chosen at any future date. The reason is that there are more districts from which one is selected. This completes the proof. 
\end{proof}

\subsection{A strong form of the Devil's Menu: four prices}\label{subsec:strong_form_part_I}

There are at least two alternative versions of the Devil's Menu that require only negligible payments for decoy ballots. Both alternatives---which are manifestations of the so-called \textit{Strong Form of the Devil's Menu}---enable the adversary to use {nearly her} entire budget for real ballots, thereby rendering the prevention of vote buying via decoy ballots completely ineffective. The first alternative follows the same mechanism as the Devil's Menu outlined in Section~\ref{sec:mechanism}, but the price setting is slightly different. Specifically, the price scheme is as follows:

\begin{table}[H]
	\begin{center}
		\def\arraystretch{1.5}
		\begin{tabular}{ c|c|c| } 
			& slot $s_1$ & slot $s_2$ \\  		\hline
			district  selected & $p_1^{se}=V+\varepsilon$ & $p_2^{se}=2\varepsilon$ \\ 	\hline
			district not selected & $p_1^{ns}=\varepsilon$& $p_2^{ns}=2\varepsilon$ \\ 	\hline
		\end{tabular}
	\end{center}
		\caption{The Strong form of the Devil's Menu---depending on the slot  applied for (column) and the final status of the voter's district (row).}
	\label{table_prices_strong}
\end{table}

Note that  by setting $\delta=2\varepsilon$ Table~\ref{table_prices_strong} follows from Table~\ref{table_prices}  and that, moreover, only three different prices are actually offered. The corresponding simultaneous-move game is denoted by $ \widehat{\mathcal{G}}_4$. We obtain the following result:

\begin{proposition}
The strategy profile  $\sigma^*$,  in which all decoy voters apply for slot~$s_2$ and all real voters apply for slot~$s_1$, is a Nash equilibrium of game~$ \widehat{\mathcal{G}}_4$.
\end{proposition}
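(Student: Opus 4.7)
The plan is to verify the two incentive constraints by computing, first, the outcome of the mechanism under $\sigma^*$ itself, and second, what each type of voter obtains after a unilateral deviation. Under $\sigma^*$ every real voter applies for $s_1$ and every decoy voter applies for $s_2$, so in each district $k$ we have $m_k = n_k^R$ and consequently $\rho_k = 1$. All districts thus lie in the tie set $T$, and Step~3 of the mechanism assigns selected status to $q$ of them by fair randomization, so each district is selected with probability $q/\bar{k}$. Reading off Tables~\ref{table_payoffs} and~\ref{table_expected_payoffs_realvoter} with $\delta = 2\varepsilon$, the baseline payoff under $\sigma^*$ is $2\varepsilon$ for every decoy voter and $V + (q/\bar{k})\varepsilon$ for every real voter.

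Next I would check that no decoy voter has a profitable deviation. Abstention yields $0$, so it is dominated; the only candidate deviation is to $s_1$. After a single decoy voter in district $k$ switches to $s_1$, that district's ratio becomes $(n_k^R+1)/n_k^R > 1$ while every other district still has ratio exactly $1$, so the deviator's district is now the unique element of $O$ and is not selected. The deviator therefore receives $p_1^{ns} = \varepsilon$, which is strictly less than the baseline $2\varepsilon$. The key micro-observation here is that, as in Corollary~\ref{cor:sabotage}, a single decoy defection from $s_2$ to $s_1$ knocks that district out of contention with probability $1$; this is the crucial step that makes the low slot~$s_2$ price of $2\varepsilon$ sustainable without the support of the higher $\delta$ used in Theorem~\ref{thm:main}.

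For a real voter, the only candidate deviations are to $s_2$ or to abstention, since both leave him with his ballot (which he values at $V > 2\varepsilon$, so he would not sell in $s_2$). Either deviation yields exactly $V$, whereas the equilibrium payoff is $V + (q/\bar{k})\varepsilon > V$ because that voter's district is selected with positive probability under $\sigma^*$ and he then earns the premium $\varepsilon$ above $V$. Hence no real voter gains from deviating either, and $\sigma^*$ is a Nash equilibrium.

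The main obstacle I anticipate is small and purely bookkeeping: one must be careful that the deviation of a single decoy voter really places his district strictly above the tie set (so that it cannot be selected), rather than merely tying with others at a higher ratio. This follows because before the deviation all $\bar{k}$ districts have $\rho_k = 1$ identically, and the deviation strictly increases only the deviator's $\rho_k$; the ordering in Step~3 then unambiguously places his district in $O$ and fills $C \cup T$ with the remaining $\bar{k} - 1$ districts, from which the $q$ selected ones are drawn. Once this is recorded, the four inequalities $\varepsilon < 2\varepsilon$ and $V < V + (q/\bar{k})\varepsilon$ complete the argument.
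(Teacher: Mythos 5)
Your proof is correct and follows essentially the same route as the paper's: real voters are handled by noting that $s_1$ (weakly) dominates $s_2$ for them, and the decoy voters' incentive constraint holds because a unilateral switch to $s_1$ pushes the deviator's district strictly above the common ratio of $1$, deselecting it with probability $1$ and yielding $\varepsilon<2\varepsilon$. Your version merely spells out the payoff computations (e.g.\ the real voter's equilibrium payoff $V+\tfrac{q}{\bar{k}}\varepsilon$) more explicitly than the paper does.
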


\begin{proof}
Under this modified Devil's Menu with {four} prices, applying for $s_1$ weakly dominates applying for $s_2$ in the case of real voters. If citizens behave in accordance with $\sigma^*$, the payoff of a decoy voter is $2\varepsilon$, with every district having the same chance {of being} selected. A deviation by a decoy voter $i$ switching from $\sigma_i=\sigma^*=s_2$ to $\sigma_i=s_1$ would then cause the deselection of his district. In such {a} case, citizen $i$ would end up with a payoff of $\varepsilon$. This means that such a deviation is not profitable, which completes the proof. 
\end{proof}
Proposition 2 has important consequences. Since $\varepsilon$ can be chosen arbitrarily small, the budget of the adversary can be used almost entirely for buying real ballots in the equilibrium described in Proposition 2. Total expenditures are then bounded by
 \begin{equation*}
 \max_{S \subseteq  \{1,\ldots,\bar{k}\}, s=q}  \left(V +\varepsilon\right)\cdot \sum_{k\in S} n^R_k +2\varepsilon\cdot \sum_{k\in\{1,\ldots,\bar{k}\}} n^D_k .
 \end{equation*}
{One} possible drawback of this strong form of the Devil's Menu is that game~${\widehat{\mathcal{G}}_4}$ may have other equilibria in which some decoy voters also apply for slot~$s_1$. To eliminate these other equilibria, one has to introduce further refinements {in} the price offering. This constitutes the second part of the strong form of the Devil's Menu.

\subsection{A strong form of the Devil's Menu: {Six} prices}\label{subsec:strong_form_part_II}

An alternative way {of reducing} the cost of buying decoy ballots almost entirely is to enlarge the menu of prices. The price scheme of the second strong form of the Devil's Menu is as follows:
\begin{table}[H]
	\begin{center}
		\def\arraystretch{1.5}
		\begin{tabular}{ c|c|c| } 
			& slot $s_1$ & slot $s_2$ \\  		\hline
			district in set $C$ (selected) & $V+\varepsilon$ & $V-\varepsilon$ \\ 	\hline
			district in set $T$ is selected & $V+\varepsilon$ & $\delta$ \\ 	\hline
			district is not selected & $\varepsilon$& $2\varepsilon$ \\ 	\hline
		\end{tabular}
	\end{center}
		\caption{The {Strong form of the} Devil's Menu with {six} prices---depending on the slot {applied for} (column) and the interim and final status of {the voter's} district (row).}
	\label{table_more_prices}
\end{table}

In this {six}-price setting procedure, $\varepsilon>0$ is assumed to be arbitrarily small. It will also suffice to choose $\delta$ such that $\delta\geq 3 {\varepsilon}$. In addition, note that in Table \ref{table_more_prices} only five different prices are actually offered. The corresponding simultaneous-move game is denoted by ${\widehat{\mathcal{G}}_6}$. We obtain the following result:

\begin{theorem}\label{thm:moreprices}\normalfont{\bf [A strong form of the Devil's Menu]}
	 {Suppose $\delta\geq 3 {\varepsilon}$. Then} the strategy profile $\sigma^*$, in which all decoy voters apply for slot~$s_2$ and all real voters apply for slot~$s_1$, is the only Nash equilibrium of game~${\widehat{\mathcal{G}}_6}$.
\end{theorem}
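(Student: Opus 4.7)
The plan is to follow the blueprint of Theorem~\ref{thm:main}'s proof, adapted to exploit the richer six-price menu that now distinguishes $s_2$ payoffs between ``certainly selected'' districts in $C$ (paying $V-\varepsilon$) and ``randomly selected'' ones in $T$ (paying $\delta$). As before, I would first invoke weak dominance to pin down that every real voter applies for $s_1$: his payoff under $s_2$ is always $V$, since each offered $s_2$ price lies weakly below his valuation, while $s_1$ delivers $V+\varepsilon$ whenever his district is selected and $V$ otherwise. This guarantees $\rho_k \geq 1$ throughout.

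Next, I would mirror the argument of Theorem~\ref{thm:main} to show that $O = \emptyset$ in any Nash equilibrium: a decoy $i \in N_k \in O$ who applied for $s_1$ earns only $\varepsilon$, whereas switching to $s_2$ yields at least $2\varepsilon$ regardless of where $N_k$ ends up after the deviation (and possibly $\delta \geq 3\varepsilon$ or $V-\varepsilon$ in more favorable cases), contradicting the equilibrium hypothesis. The substantive new step is to show that $C = \emptyset$ as well, which is where the payoff $V-\varepsilon$ offered to $s_2$ applicants in $C$ becomes crucial. For any candidate $N_{k_0} \in C$ containing a decoy $j$ with $\sigma_j = s_2$, one observes that $j$'s current payoff $V-\varepsilon$ is strictly dominated by $V+\varepsilon$ whenever the switch to $s_1$ leaves $N_{k_0}$ inside $C'$---i.e., whenever $\rho_{k_0} + 1/n_{k_0}^R < \rho$---delivering a profitable deviation. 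If instead all decoys in $N_{k_0}$ already play $s_1$, one exhibits a profitable deviation for a decoy in some district $N_{k'} \in T$ currently playing $s_1$: switching to $s_2$ moves $N_{k'}$ into $C'$ and raises his payoff from $\frac{q-c}{\bar{k}-c}V + \varepsilon$ up to $V-\varepsilon$.

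Once $O = C = \emptyset$ is established, every district lies in $T$ with a common ratio $\rho \geq 1$, and one final application of the $T$-to-$C$ deviation above forces $\rho = 1$, identifying $\sigma$ with $\sigma^*$. The main obstacle is the boundary regime $c = q-1$: there the $T$-to-$s_2$ deviation leaves the deviator's district as the sole member of a new $T'$ selected with probability one, so the deviation payoff collapses to $\delta$ rather than $V-\varepsilon$, and the hypothesis $\delta \geq 3\varepsilon$ alone does not guarantee profitability. Handling this case requires falling back on the parallel $C$-internal deviation described above, and the technical heart of the proof is to verify carefully that whenever $\sigma \neq \sigma^*$ at least one of these two types of deviation strictly improves some voter's payoff.
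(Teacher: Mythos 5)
Your overall architecture coincides with the paper's: weak dominance pins real voters to $s_1$, the $2\varepsilon$-versus-$\varepsilon$ comparison empties $O$, and the decisive step is the deviation of a decoy voter in a $T$-district from $s_1$ to $s_2$, which is meant to land his district in the new set $C'$ and earn him $V-\varepsilon$ in place of $\frac{q-c}{t}(V+\varepsilon)+\frac{t-q+c}{t}\varepsilon\leq\frac{q}{\bar{k}}V+\varepsilon$. (Your separate step ``$C=\emptyset$'' is redundant: since every $\rho_k\geq 1$ once real voters play $s_1$, $C=\emptyset$ follows automatically from $\rho=\rho_q=1$, which is how the paper organizes the argument; your $C$-internal deviation argument is also only conditional, as you note yourself.)

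The genuine problem is the boundary case you flag in your last paragraph, and you do not close it. When $q=c+1$ and the deviator's new ratio $\rho-1/n_k^R$ ends up (weakly) largest among the $c+1$ ratios now lying below the remaining $T$-districts, his district becomes the sole member of the new $T'$: it is selected with probability one, but under the pricing convention of Table~\ref{table_more_prices} (see the paper's footnote on singleton $T$) its $s_2$-applicants are paid $\delta$, not $V-\varepsilon$, and with $\delta=3\varepsilon$ the deviation is unprofitable. Your proposed fallback---the $C$-internal deviation from $s_2$ to $s_1$---is unavailable precisely in the cleanest instance of this case: take $q=1$ and all districts tied at a common ratio $\rho>1$ (say one decoy per district applying for $s_1$, with the $n_k^R$ chosen so the ratios coincide). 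Then $C=\emptyset$; an $s_1$-decoy's switch to $s_2$ leaves his district as the singleton $T'$ and yields only $\delta$; an $s_2$-decoy's switch to $s_1$ pushes his district into $O'$ for a payoff of $\varepsilon$; and real voters gain nothing by moving. No player has a profitable deviation, so ``verifying carefully that at least one of the two deviations works'' cannot succeed as a strategy. To be fair, the paper's own proof simply asserts that the deviator's district always lands in $C'$ and thereby skips exactly this configuration; you have correctly located the weak point, but your proposal does not repair it, and as written the key uniqueness step fails whenever $q=c+1$.
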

\begin{proof}
	As in the proof of Theorem 1, abstaining is weakly dominated for all citizens, while applying for slot~$s_2$ is weakly dominated for all real voters. We also observe that $o=0$ must hold again in any equilibrium $\sigma=(\sigma_i)_{i\in N}$. Indeed, if $o>0$, there must exist decoy voters in districts in $O$ who applied for slot~$s_1$. However, these voters can strictly improve their payoff by applying for slot~$s_2$. The reason is as follows: Let $i\in N_k$ be a decoy voter such that $N_k\in T$ and $\sigma_i=s_1$. If $i$ deviated and applied instead for $s_2$, either district $N_k$ would become a selected district and the expected utility strictly larger than $\varepsilon$ (since $V-\varepsilon>\varepsilon$), or district $N_k$ would still belong to $O$, in which case the deviation would increase the payment to citizen $i$ (since $\delta>2\varepsilon$). Hence, we obtain that $o=0$, and it follows that $t=\bar{k}-c$.
	
	Next we show that $\rho>1$ cannot occur in equilibrium either. Suppose  that $\rho>1$ for $\sigma$. 
	Then, in each district that belongs to $T$, there must be at least one decoy voter $i$ such that $\sigma_i=s_1$. Consider one of these districts, say $N_k$, and a decoy voter $i$ in $N_k$ who has applied for slot~$s_1$. The probability that $N_k$ will be selected by fair randomization is equal to $\frac{q-c}{t}$. Then $i$'s expected payoff is equal to $\frac{q-c}{t}\cdot (V+\varepsilon)+\frac{t-q+c}{t}\cdot\varepsilon$. By contrast, if $i$ deviates from his strategy in $\sigma$ and applies for $s_2$, set $T$ will consist of $t-1$ districts and set $C$ will consist of $c+1$ districts. In this latter case, district $N_k$ will be chosen with probability $1$, and the payoff to decoy voter $i$ will be equal to $V-\varepsilon$. We claim that the latter payoff is strictly larger than $\frac{q-c}{t}\cdot (V+\varepsilon)+\frac{t-q+c}{t}\cdot\varepsilon$ if $\varepsilon$ is sufficiently small, {so} $i$ is better off if he applies for slot~$s_2$. Indeed, using $t=\bar{k}-c$ and $q<\bar{k}$, and taking $\varepsilon>0$ arbitrarily low, we have
	\begin{equation*}
	\frac{q-c}{t} (V+\varepsilon)+\frac{t-q+c}{t}\varepsilon =\frac{q-c}{\bar{k}-c} V+\varepsilon \leq\frac{q}{\bar{k}} V +\varepsilon<V-\varepsilon.
	\end{equation*}
	Finally, $\rho=1$ is an equilibrium if no decoy voter wants to apply for slot~$s_1$. In the unique equilibrium featuring $\rho=1$, the expected payoff for such a voter is $2\varepsilon$. Applying for slot~$s_2$ would result in $2\varepsilon$, since the district to which this citizen belongs would {no longer} be selected. Hence, the deviation is not profitable, which completes the proof. 
\end{proof}

The Devil's Menu with {six} prices efficiently eliminates almost all expenditures on decoy ballots {and again generates more power for buying} real ballots. Indeed, by taking $\delta=3\varepsilon$, total expenditures are now bounded by
\begin{equation*}
\max_{S \subseteq  \{1,\ldots,\bar{k}\}, s=q}  \left(V +\varepsilon\right)\cdot \sum_{k\in S} n^R_k +3\varepsilon\cdot \sum_{k\in\{1,\ldots,\bar{k}\}} n^D_k .
\end{equation*} 
It is however important to point out that this second strong form of the Devil's Menu rests {decisively} on the assumption that the adversary can credibly offer two different prices for selected districts, at least for slot~$s_2$. The two prices discriminate among districts based on their interim status. On the one hand, the district may belong to $C$---and hence its interim status is immediately selected. On the other hand, the district may belong to $T$ and be selected only after it has been chosen by fair randomization.\footnote{It may happen that set $T$ contains one district only, in which case such district will be selected with certainty. Citizens of this district are thus offered prices than citizens of all other selected districts.} This property can be built into the algorithms executed by Smart Contracts, but it may be less accepted by citizens.

\section{Extensions and Applications}\label{sec:extensions}

In this section we do three things. First, we analyze the role of the adversary's commitment power. Second, we investigate some extensions of the baseline framework set out in Section~\ref{sec:mechanism}. Third, we discuss applications of the Devil's Menu to the standard ``Lemons Problem.''

\subsection{A simple Devil's Menu with strong commitment}\label{subsec:strong_commitment}

For all the Devil's Menu variants we have analyzed in the previous sections, it was essential that the adversary could commit to the use of the protocol underlying the procedure. Stronger assumptions on the commitment power of the adversary would open up other forms of the Devil's Menu. A very simple procedure would be sufficient if the adversary could commit to not buying any ballot {if} the number of applicants across slots {did not satisfy} a certain criterion. This can be illustrated for a situation in which there are no districts or, equivalently, any information regarding them is disregarded by the adversary. Recall that $n^R$ denotes the total number of real voters. {Now}, suppose that the adversary wants to buy $l$ real ballots, with $l\leq n^R$. For that purpose, {she} considers the following procedure: 
\begin{enumerate}
	\item Two slots, $s_1$ and $s_2$, are announced.
	
	\item Each citizen applies for one of the two slots.
	
	\item With citizens being able to decide whether to sell their vote or not at the prevailing price and the adversary being obliged to accept the transaction at the request of the citizens, payoffs are realized according to the following rule:
	\begin{itemize}
		\item If more than $n^R$ citizens apply for slot~$s_1$, then all of them are offered $0$ in exchange for their ballot. In turn, all citizens who applied for slot~$s_2$ {are} offered $\varepsilon$, also in exchange for their ballot.
		
		\item If $n^R$ voters {at most} applied for slot~$s_1$, then $l$ of them are chosen  randomly. {They} then receive $V+\varepsilon$ in exchange for their ballot. In turn, all citizens who applied for slot~$s_2$ {are} offered $\varepsilon$, also in exchange for their ballot.
	\end{itemize}
\end{enumerate}

It is easy to see that in the simultaneous-move game induced by the mechanism described above, the strategy profile where all the real citizens apply for the slot~$s_1$ and all the decoy voters apply for the slot~$s_2$ is the only Nash equilibrium, provided that real voters do not play weakly dominated strategies. Moreover, the expenditures used to buy decoy ballots are negligible. Note that the assumption on the commitment power of the adversary is crucial for the functioning of the procedure, as citizens must believe that if the adversary ends up with zero ballots, she will not renege on her promise not to try and buy some.

One important drawback of this mechanism is that, unlike the Devil's Menu with {four} or more prices---see e.g. Corollary~\ref{cor:sabotage}---, any single decoy voter has the power to {easily} sabotage the mechanism by applying for slot~$s_1$. In such cases, repeating the mechanism will not help. The reason is the following: If there is a chance that the mechanism will be repeated, incentives will be created  for any decoy voter to deviate in the first round and apply for slot~$s_1$. 

\subsection{Noise}\label{subsec:noise}

To counteract the power of the adversary when using the Devil's Menu, an election designer might want to introduce noise. One possibility would be {random choice of the number} of real voters across districts.\footnote{We {emphasize} that the functioning of the Devil's Menu is not affected if there is uncertainty about the number of decoy ballots, provided that no citizen can hold more than one ballot.} {On} other occasions, the adversary might simply not know the precise number of real voters in each district, particularly if there is no natural partition of the entire population into small subgroups upon which the procedure can be built. For instance, there may be {only} one district. In either of these situations, the Devil's Menu could still be applied, based on the expected number of real voters in each district. In doing so, however, some decoy voters may be paid the higher price~$V+\varepsilon$. This would result in overspending by the adversary.


\subsection{General valuation distribution}\label{subsec:general_valuation}

In our analysis thus far, we {have} assumed that all real voters had a common valuation $V$ for their ballots. In a real environment, however, voter valuations will be distributed within some range. In this case, one possibility {of buying} all the real ballots of $q$ districts would be to have $V$ denote the maximum possible voter valuation and then run the Devil's Menu accordingly. This approach could nonetheless be very inefficient, since some real voters may be offered prices that are well above their willingness to sell. Alternatively, the adversary could have $V$ denote a price threshold ensuring that a {large} majority of the population will have their valuations below it with very high probability. If the adversary {ran} the Devil's Menu with this parameter, real voters with valuations below $V$ would participate and those above would abstain. Given sufficiently accurate information on the distribution, the Devil's Menu might still be reasonably efficient. 

Another feature of our analysis that was missing in the previous sections is that in both elections or referenda ballots are to be cast in favor of some alternative. This implies that a ballot is not a standard consumption good. In particular, voters may obtain the same utility regardless of who is eventually casting their ballot, as long as it is done in favor of their preferred alternative. Accordingly, if the adversary herself {had} some preference regarding the alternatives at hand---and hence would not simply act as a reseller of ballots---and this preference was common knowledge, the real voters who had the same preference as the adversary could be considered decoy voters. The reason is that they would anticipate that the adversary would use their vote as they intended, so their willingness to sell would drop from $V$ to $0$.\footnote{With only two alternatives, $V$ can be interpreted as the differential utility citizens with different preferences than the adversary would obtain from casting a vote against {their} own preferred alternative.} As discussed above, if the adversary learned the number of real voters who share her preferences, she could still run the Devil's Menu very efficiently by taking into account {the fact} that these citizens {are behaving} as decoy voters. 

\subsection{The Devil's Menu and the ``Lemons Problem''}\label{subsec:lemons}

The Devil's Menu offers further applications {in connection with} the so-called ``Lemons  Problem.''\footnote{The classical {``Lemons Problem''} is outlined in \cite{akerlof1970market}. We refer to \cite{riley2001silver} for an early survey of possible solutions to lemon  problems and to \cite{kim2012endogenous} for partial solutions using specific market organizations. For mechanism design approaches that could be used, see~\cite{borgers2015introduction}.} We outline {one} of them. Suppose that an agent---the \textit{buyer}---wants to buy \textit{one} used car of high quality, and that she can buy it from several individuals---the \textit{sellers}---each {of whom have} a used car. The buyer knows the number of good-quality and bad-quality cars among the sellers, yet she does not know which particular sellers have them. In this situation, the buyer could run the Devil's Menu {in} Section~\ref{subsec:strong_commitment}, offer two slots{---$s_1$ and $s_2$---}to all sellers, and then select one seller from those who {have} applied for slot $s_1$ by fair randomization, from which the buyer would get the car at the prevailing price. This strategy for an adversary could be particularly appealing in the Internet, in which case the purchase of the car could be set up similarly as in online auctions.

\section{Conclusion}\label{sec:conclusion}

We have shown that electronic voting systems based on decoy ballots may be vulnerable to sophisticated attacks, particularly when the population with {a} right to vote is divided into subgroups and the adversary exhibits some degree of commitment power. In such cases, the adversary may significantly reduce her expenditures on decoy ballots and thus use her budget mainly to buy real ballots. This would hence render decoy ballots an ineffective tool {for fighting} vote buying by trying to blow up the adversary's budget. 

{While} we have limited ourselves to a simple scenario, {in our analysis we have also} touched upon some more general scenarios, the comprehensive understanding of which merits analysis in {each particular case}. Examples are the existence of imperfect information about the number of real ballots and applications of the Devil's Menu to {the} ``Lemons Problem.'' Numerous further scenarios also lie ahead of us. {We refer to three of them.} First, citizens may have more than one ballot or, similarly, they could collude and adopt a unified strategy. Second, there could be two adversaries trying to buy votes, possibly with conflicting interests. Third, as a further application to the ``Lemons Problem'', the Devil's Menu could be tailored to situations where an agent---the adversary---wants to buy a valuable secret from a group of individuals, say, a firm or a bureaucratic unit, knowing that only one of the individuals has it. Examples could be a password or the identity of the person who has important knowledge. All these instances are left to future research.




\bibliographystyle{apalike}

\bibliography{decoy}

\end{document}